\documentclass[11pt]{article}
\usepackage[dvipsnames,svgnames,x11names,hyperref]{xcolor}
\usepackage{amsmath}
\usepackage{amsthm}
\usepackage[T1]{fontenc}

\usepackage{fullpage}
\usepackage{thmtools}
\usepackage[colorlinks]{hyperref}
\hypersetup{
  linkcolor=[rgb]{0,0,0.4},
  citecolor=[rgb]{0, 0.4, 0},
  urlcolor=[rgb]{0.6, 0, 0}
}
\usepackage{amssymb,amsfonts,amsmath,amsthm,amscd,dsfont,mathrsfs,bbm}
\usepackage{todonotes}
\usepackage{complexity}
\usepackage{tikz}
\usetikzlibrary{decorations.pathreplacing,backgrounds}
\usepackage{multirow}

\usepackage{amsthm}
\usepackage{thmtools,thm-restate}

\numberwithin{equation}{section}
\declaretheoremstyle[bodyfont=\it,qed=\qedsymbol]{noproofstyle}

\declaretheorem[name=Observation,numbered=no]{observation*}

\declaretheorem[numberlike=equation]{fact}

\declaretheorem[numberlike=equation]{theorem}

\declaretheorem[name=Theorem,numbered=no]{theorem*}

\declaretheorem[numberlike=equation]{lemma}
\declaretheorem[name=Lemma,numbered=no]{lemma*}

\declaretheorem[name=Corollary,numbered=no]{corollary*}

\declaretheorem[name=Proposition,numbered=no]{proposition*}

\declaretheorem[name=Claim,numbered=no]{claim*}

\declaretheorem[name=Conjecture,numbered=no]{conjecture*}

\declaretheorem[name=Question,numbered=no]{question*}

\declaretheoremstyle[bodyfont=\it,qed=$\lozenge$]{defstyle} 

\declaretheorem[numberlike=equation,style=defstyle]{definition}
\declaretheorem[unnumbered,name=Definition,style=defstyle]{definition*}

\declaretheorem[unnumbered,name=Example,style=defstyle]{example*}

\declaretheorem[unnumbered,name=Notation=defstyle]{notation*}

\declaretheorem[unnumbered,name=Construction,style=defstyle]{construction*}

\declaretheorem[numberlike=equation,style=defstyle]{remark}
\declaretheorem[unnumbered,name=Remark,style=defstyle]{remark*}


\usepackage{nth}
\usepackage{intcalc}
\usepackage{etoolbox}
\usepackage{xstring}

\usepackage{ifpdf}
\ifpdf
\else
\usepackage[quadpoints=false]{hypdvips}
\fi

\newcommand{\shortECCC}[2]{\texttt{\href{http://eccc.hpi-web.de/report/\ifnumcomp{#1}{>}{93}{19}{20}#1/#2/}{eccc:TR#1-#2}}}

\newcommand{\parseECCC}[1]{
\StrSubstitute{#1}{TR}{}[\tmpstring]%
\IfSubStr{\tmpstring}{/}{ 
\StrBefore{\tmpstring}{/}[\ecccyear]%
\StrBehind{\tmpstring}{/}[\ecccreport]%
}{
\StrBefore{\tmpstring}{-}[\ecccyear]%
\StrBehind{\tmpstring}{-}[\ecccreport]%
}%
\shortECCC{\ecccyear}{\ecccreport}}

\newcommand{\F}{\mathbb{F}}
\newcommand{\N}{\mathbb{N}}
\newcommand{\CC}{\mathbb{C}}

\newcommand{\va}{\mathbf{a}}

\newcommand{\vx}{\mathbf{x}}

\newcommand{\vzero}{\mathbf{0}}

\newcommand{\Perm}{\mathsf{Perm}}
\newcommand{\Det}{\mathsf{Det}}

\newcommand{\rank}{\mathsf{rank}}

\DeclareMathOperator{\Sing}{Sing}
\DeclareMathOperator{\Img}{Im}
\DeclareMathOperator{\adj}{adj}
\DeclareMathOperator{\dc}{\mathsf{dc}}
\DeclareMathOperator{\edc}{\mathsf{edc}}

\newcommand{\set}[1]{\left\{ #1 \right\}}

\title{A Lower Bound on Determinantal Complexity}
\author{Mrinal Kumar\thanks{Department of Computer Science \& Engineering, IIT Bombay. Email:\texttt{ mrinal@cse.iitb.ac.in}}
\and
 Ben Lee Volk\thanks{Efi Arazi School of Computer Science, Reichman University, Israel. Part of this work was done while at the Department of Computer Science, University of Texas at Austin, USA, supported by NSF Grant CCF-1705028, and while at the Center for the Mathematics of Information, California Institute of Technology, USA. Email:\texttt{ benleevolk@gmail.com} }}

\date{}

\begin{document}

\maketitle

\abstract{
The determinantal complexity of a polynomial $P \in \F[x_1,  \ldots, x_n]$ over a field $\F$ is the dimension of the smallest matrix $M$ whose entries are affine functions in $\F[x_1,  \ldots, x_n]$ such that $P = \Det(M)$. We prove that the determinantal complexity of the polynomial $\sum_{i = 1}^n x_i^n$ is at least $1.5n - 3$. 

For every $n$-variate polynomial of degree $d$, the determinantal complexity is trivially at least $d$, and it is a long standing open problem to prove a lower bound which is super linear in $\max\{n,d\}$. Our result is the first lower bound for any explicit polynomial which is bigger by a constant factor than $\max\{n,d\}$, and improves upon the prior best bound of $n + 1$, proved by Alper, Bogart and Velasco \cite{ABV17} for the same polynomial.

\section{Introduction}

\subsection{Computing with Determinants}
The \emph{determinantal complexity} of a polynomial $f \in \F[x_1, \ldots, x_n]$, denoted $\dc(f)$, is the minimal integer $m$ such that there exists an affine map $L : \F^n \to \F^{m \times m}$ such that $f(\vx) = \mathsf{Det}(L(\vx))$, where for every square matrix $M$, $\Det(M)$ denotes the determinant of $M$.

This notion was first implicitly defined by Valiant \cite{V79}, and it is tightly related to the $\VP$ vs.\ $\VNP$ problem, the algebraic analog of the $\P$ vs. $\NP$ problem. The essence of the $\VP$ vs.\ $\VNP$ problem is showing that some explicit polynomials are hard to compute. By defining natural notions of reductions and completeness, Valiant showed that this problem is in fact equivalent to showing that, for fields of characteristic different than two, the determinantal complexity of the permanent polynomial,
\[
\Perm_n(X)  = \sum_{\sigma \in S_n} \prod_{i=1}^n x_{i,\sigma(i)},
\]
doesn't grow like a polynomial function in $n.$\footnote{Strictly speaking, the $\VP$ vs. $\VNP$ question is equivalent to showing that the determinantal complexity of the $\Perm_n$ is at least $n^{\omega(\log n)}$, but we skip over this fine grained detail for now.}

This fact is a consequence of the \emph{completeness} property of the determinant: Valiant showed that if $f$ has an algebraic formula of size $s$, then the determinantal complexity of $f$ is at most $s$. This remains true even if $f$ has an \emph{algebraic branching program} (ABP) of size $s$: ABPs are a natural and more powerful model of computation than formulas. We refer to \cite{SY10} and \cite{S15} for more background on algebraic complexity theory and for proofs of these statements.

Thus, Valiant also established en passant the non-obvious fact that the determinantal complexity of every polynomial is finite, and it's at most roughly $\binom{n+d}{n}$ for every $n$-variate polynomial of degree $d$. Standard counting arguments also show that this estimate is close to being tight for almost every such polynomial.

The benefit of this reformulation of the $\VP$ vs.\ $\VNP$ problem is that it appears to strip away altogether the notion of ``computation'': indeed, this problem can be stated without even defining a computational model in any standard sense of the word, and thus it can potentially be proved without having to argue about the topology or structure of every possible arithmetic computation.

In practice, however, proving lower bounds on determinantal complexity is (unsurprisingly) difficult. Currently, for $n$-variate polynomials, there are no known lower bounds which are super-linear in $n$ (see \autoref{sec:previous results} for more details on previous work). Due to the completeness property mentioned above, a lower bound of $s$ on the determinantal complexity of $f$ will imply the same lower bound for algebraic formulas and even algebraic branching programs. However, super-linear lower bounds for formulas are well-known for decades \cite{K85}, and super-linear lower bounds for ABPs were recently established in \cite{CKSV19}, so there doesn't seem to be any major complexity-theoretic barrier for proving such lower bounds for determinantal complexity: the main obstacle is seemingly lack of techniques for reasoning about computations using determinants, and hence it is important to study this model and to develop techniques to understand it and to prove lower bounds, for the permanent as well as for other explicit polynomials.

Even for the purpose of separating $\VP$ and $\VNP$, one need not necessarily prove a lower bound on the determinantal complexity of the permanent;  the same conclusion will hold if the lower bound on determinantal complexity is shown for any ``explicit'' polynomial (formally, in the class $\VNP$, which we don't define here) in lieu of the permanent.

Before we describe the previous work concerning determinantal complexity, we provide a brief remark about the notion of a ``trivial'' lower bound in this context which is worth remembering when evaluating the previous results (and our result). Unlike most standard computational models, observe that for an $n$-variate polynomial of degree $d$, even a lower bound of $n$ is non-trivial for determinantal complexity. This is because every coordinate of the affine  map $L$ can depend on all $n$ variables. Nevertheless, since the determinant of an $m \times m$ matrix is a degree $m$ polynomial, and thus $\Det(L(\vx))$ is a polynomial of degree at most $m$ for every affine map $L$, the degree $d$ \emph{is} a trivial lower bound on the determinantal complexity of $f$. Therefore, it is natural to consider polynomial families in which $d \le n$ or alternatively to hope to prove lower bounds stronger than $\max\{n,d\}$.

\subsection{Previous work}\label{sec:previous results}

The early work on determinantal complexity mostly focused on proving lower bounds for the permanent. Recall that the $n \times n$ permanent, $\Perm_n$, is a degree $n$ polynomial, so the trivial lower bound is $\dc(\Perm_n) \ge n$. Since over characteristic 2 the permanent and determinant coincide, the results described here hold for characteristic not equal to 2.

Already in 1913, Szeg\H{o} \cite{Szego13}, answering a question of P\'{o}lya \cite{Polya13}, showed that there's no way to generalize the $2 \times 2$ identity
\[
\Perm
\begin{pmatrix}
x_{1,1} & x_{1,2} \\
x_{2,1} & x_{2,2}
\end{pmatrix}
=
\Det
\begin{pmatrix}
x_{1,1} & x_{1,2} \\
-x_{2,1} & x_{2,2}
\end{pmatrix}
\]
by affixing $\pm$ signs to an $n\times n$ matrix of variables for $n \ge 3$.

Marcus and Minc \cite{MM61} strengthened this result by showing that for every $n$, $\dc(\Perm_n) > n$. Subsequent work by von zur Gathen \cite{vzG87}, Babai and Seress (see \cite{vzG87}), Cai \cite{Cai90} and Meshulam \cite{Meshulam} obtained the slightly stronger lower bound $\dc(\Perm_n) \ge \sqrt{2}n$.

Mignon and Ressayre \cite{MR04} greatly improved the lower bound by proving $\dc(\Perm_n) \ge n^2/2$, over the complex numbers. Cai, Chen and Li \cite{CCL10} extended this lower bound to fields of positive characteristic different than two, and Landsberg, Manivel and Ressayre \cite{LMR13} extended this result to the \emph{border} version of determinantal complexity, that is, they showed that the permanent is not even in the closure of polynomials with determinantal complexity less than $n^2/2$. Finally, Yabe \cite{Yabe} obtained an improved lower bound of $(n-1)^2+1$ over the real numbers.

However, while these lower bounds are quadratic in the degree, $\Perm_n$ is a polynomial with $n^2$ many variables, and notably none of these lower bounds is larger than the number of variables. In particular, these results don't even recover a weak form of the $n^3$ formula lower bound of Kalorkoti for $\Perm_n$ \cite{K85}.

Landsberg and Ressayre \cite{LR17} considered determinantal representations that respect certain symmetries (which they called \emph{equivariant determinantal complexity} and denoted $\edc$), and proved that $\edc(\Perm_n)$ is exponential in $n$. It's unclear how stringent the symmetry requirement is; Ladnsberg and Ressayre put forward the ambitious conjecture that $\edc$ and $\dc$ are polynomially related, which, if true, would imply $\VP \neq \VNP$. To the best of our knowledge, this conjecture remains open, but it's worth mentioning that in the context of \emph{regular determinantal complexity}, another notion defined and studied by \cite{LR17}, it can be shown unconditionally that requiring symmetry may result in a super-polynomial blow-up \cite{IL17}.

The question of lower bounds for other explicit polynomial was also considered: Mignon and Ressayre \cite{MR04} proved that the determinantal complexity of quadratic polynomials of rank $r$ is \emph{exactly} $\lceil (r+1)/2 \rceil$ (this, of course, cannot give a lower bound beyond $\lceil (n +1)/2 \rceil$). Chen, Kayal and Wigderson \cite{CKW11} observed that the technique of Mignon and Ressayre implies an $n/2$ lower bound on the determinantal complexity of the elementary symmetric polynomial of degree 2, $\sum_{1 \le i < j \le n} x_i x_j$. Kumar \cite{K19} used a different technique to prove a similar lower bound for the power symmetric polynomials $\sum_i x_i^d$ for $d \ge 2$ over $\CC$.

The last lower bound was improved in a recent work of Alper, Bogart and Velasco \cite{ABV17}: an immediate corollary of their main theorem is that $\dc \left( \sum_i x_i^d \right) \ge n+1$, for every $d \ge 2$. Note that this lower bound is (only slightly) larger than the number of variables $n$, which is the first lower bound we are aware of with this feature. The results of Alper et al.\ are more general, and are stated as a function of the co-dimension of the singular locus of the polynomial, a notion we use as well (see \autoref{sec:main-lb}). In particular they are able to prove that $\dc(\Perm_3)=7$, but their main statement can't imply any lower bound stronger than $n+1$ for an $n$-variate polynomial.

\subsection{Our result}\label{sec:our result}
Our main result is the following theorem. 
\begin{theorem}\label{thm:main}
For every natural number $n \ge 6$, the determinantal complexity of the polynomial  $\sum_{i = 1}^n x_i^n$ over the field of complex numbers  is at least $1.5n - 3$. 
\end{theorem}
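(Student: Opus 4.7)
The plan is to analyze a hypothetical affine representation $L:\CC^n\to\CC^{m\times m}$ with $\Det L(\vx)=P_n(\vx):=\sum_{i=1}^n x_i^n$, exploiting two structural features at the origin. Let $A_0:=L(\vzero)$ with corank $r$. By left- and right-multiplying $L$ by constant invertible matrices (which rescales $\Det L$ by a nonzero constant and hence preserves the hypothesis up to a scalar), we may assume $A_0$ is block-diagonal with $I_{m-r}$ in the top-left and zero elsewhere, and partition $L$ accordingly as
\[
L(\vx)=\begin{pmatrix}I_{m-r}+B(\vx)&C(\vx)\\ D(\vx)&E(\vx)\end{pmatrix},
\]
with $B,C,D,E$ matrices of linear forms in $\vx$. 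The Schur complement formula gives $\Det L=\Det(I+B)\cdot\Det M$, where $M:=E-D(I+B)^{-1}C$ is an $r\times r$ matrix of formal power series whose entries vanish at the origin.

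Two constraints cut down the possibilities. First, a \emph{multiplicity} argument: since $\Det(I+B)$ is a unit at $\vzero$ and $P_n$ vanishes there to order exactly $n$, $\Det M$ must have order $n$; as $M$ is $r\times r$ with entries of order at least $1$, this forces $r\le n$, and when $r<n$ the linear part $E_1$ of $E$ is a singular matrix of linear forms (since its determinant is the order-$r$ leading form of $\Det M$, which must vanish). Second, an \emph{isolated-singularity} argument: because $\nabla\Det(A)=\adj(A)^{\top}$ vanishes on matrices of corank $\ge 2$, $L^{-1}(\set{A:\rank A\le m-2})\subseteq\Sing(\set{P_n=0})=\set{\vzero}$. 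The variety $\set{\rank\le m-2}$ is irreducible of codimension $4$ in $\CC^{m\times m}$, and a projective intersection-theoretic count against $\Img L$, in the spirit of Alper--Bogart--Velasco, forces this preimage to be empty for $n$ large enough, i.e.\ $r\le 1$. Combined with $P_n(\vzero)=0$, which rules out $r=0$, we conclude $r=1$.

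Under $r=1$ the Schur complement collapses to a scalar; expanding $(I+B)^{-1}=\sum_{k\ge 0}(-1)^k B^k$ and matching the order-$n$ coefficient (the lower-order terms must all vanish so that the Schur complement has order $n$) yields
\[
(-1)^{n-1}\,D(\vx)\,B(\vx)^{n-2}\,C(\vx)=\sum_{i=1}^n x_i^n,
\]
where $D,B,C$ are matrices of linear forms of sizes $1\times(m-1)$, $(m-1)\times(m-1)$, $(m-1)\times 1$ respectively. This realizes $\sum x_i^n$ as the output of an iterated matrix product in which the \emph{same} transition matrix $B(\vx)$ is reused at each of the $n-2$ middle layers. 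The main obstacle, and the source of the improvement from the known $n+1$ bound to $1.5n-3$, will be to prove that this repeated-matrix model requires width at least $1.5n-4$: a standard middle-layer / bilinear-rank argument (using that the partial-derivative matrix of $\sum x_i^n$ has rank $n$ at every degree split) only recovers the bound $m\ge n+1$, and the extra factor of roughly $n/2$ must come from the rigidity of re-using the same $B$ throughout, for instance by combining the evaluations $D_i B_i^{n-2}C_i=1$ at the standard basis vectors with the many cross-term cancellations forced at non-basis points.
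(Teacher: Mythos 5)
Your first two stages are sound and match the paper's own normalization: after left/right multiplication by constant invertibles, and using von zur Gathen's description of $\Sing(\Det_m)$ together with $\dim\Sing\bigl(\sum_i x_i^n\bigr)=0$ and the fiber-dimension inequality, one indeed gets $\rank L(\vzero)=m-1$, so the corank is $r=1$. This is precisely the content of \autoref{lem:image misses singular locus} and \autoref{lem: rearranging constant terms}.

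From there, however, you diverge and then stop short of a proof. You Schur-complement against the \emph{entire} $(m-1)\times(m-1)$ top-left block, collapse to the scalar identity $(-1)^{n-1}D(\vx)B(\vx)^{n-2}C(\vx)=\sum_i x_i^n$, and then explicitly concede that you cannot lower-bound the width of this ``repeated-matrix-product'' model beyond the known $n+1$: the middle-layer/partial-derivative rank argument gives only $m-1\ge n$. That remaining step \emph{is} the theorem, so the proposal has a genuine and acknowledged gap; it reduces the problem to a different (and, as far as anyone knows, open) repeated-matrix lower bound rather than closing it.

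The paper's proof goes a different way and supplies exactly the two missing ingredients. First, the key \autoref{lemma: width lb} (from \cite{CKSV19, K19}): if $\sum_i x_i^d=\sum_{j=1}^t P_jQ_j+P'$ with $\deg P'<d$ and all $P_j,Q_j$ sharing a common zero, then $t\ge n/2$. Second, a different Schur complement: rather than eliminating $m-1$ rows/columns, eliminate only a principal $(n-2)\times(n-2)$ block $D$. Clearing the denominator $\Det(D)=1+R$ yields a matrix $N$ of size $(m-n+2)\times(m-n+2)$ whose entries are polynomials of degree at most $n-1$ and whose constant part is again in normal form, with $\Det(N)=\bigl(\sum_i x_i^n\bigr)(1+R)^{m-n+1}$. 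Laplace-expanding $\Det(N)$ along its first row puts $\sum_i x_i^n$ (up to a unit and a low-degree remainder $M_{1,1}$) into exactly the product-sum form the lemma needs, with all factors constant free, giving $m-n+2\ge n/2-1$ and hence $m\ge 1.5n-3$ (\autoref{lem: n/2 lb for higher deg entries} and \autoref{sec:completing the proof}). In short: the quantitative gain over $n+1$ comes from choosing the minor size so that the residual matrix stays at degree $\le n-1$, which is the regime where the constant-free product-sum lemma applies; your full-block Schur complement discards that structure and lands on a model for which no improved bound is available.
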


Although for simplicity we state our results for the complex numbers, all the results in this paper also hold for algebraically closed fields of positive characteristic $p$, as long as $p$ doesn't divide $n$. This assumption is not only an artifact of the proof. For example, when $n=p^k$, and over characteristic $p$,
\[
\sum_{i=1}^{p^k} x_i^{p^k} = \left( \sum_{i=1}^{p^k} x_i \right)^{p^k}
\]
has determinantal complexity at most $n=p^k$; it is also a polynomial of degree $n$, so its determinantal complexity is at least, and hence equals, $n$.

As discussed in \autoref{sec:previous results}, this is the first non-trivial\footnote{This means that the degree of the polynomials is at most the number of variables.} lower bound of the form $(1 + \epsilon) n$, for any $\epsilon > 0$ for any explicit $n$ variate polynomial family, and improves the previous best bound of $n + 1$ by Alper, Bogart and Velasco \cite{ABV17} by a constant factor.

This result, of course, is not fully satisfactory. The best upper bound we're aware of for $\dc(\sum_{i=1}^n x_i^n)$ is $O(n^2)$, which follows from converting the natural algebraic formula or ABP computing this polynomial to a determinantal expression. We suspect that the true complexity might be $\Omega(n^2)$ or at the very least $\omega(n)$.

Quantitatively, the situation here is somewhat similar to the case of lower bounds on the rank of 3-dimensional tensors, where the best lower bounds are only a constant factor away from the trivial lower bound, and proving super-linear lower bounds remains a challenging open problem (cf.\ \cite{AFT11,BD80,Blaser99,Shpilka01}, among others).

We now give an outline of the main ideas in our proof. 
\subsection{Overview of the proof}
Let $M \in \F[x_1, x_2, \ldots, x_n]^{m \times m}$ matrix of affine functions such that $\sum_{i = 1}^n x_i^n = \Det(M(\vx))$.  \autoref{thm:main} shows a lower bound of $1.5n - 3$ on $m$. There are essentially three main ingredients to the proof of \autoref{thm:main}, and we now discuss them in some more detail. 
\subsubsection*{Converting the matrix $M$ into a normal form}
 Let $M_0 \in \F^{m \times m}$ be the \emph{constant part} of the matrix $M$, i.e. $M_0 = M(\mathbf{0})$.  As a first step of our proof, we show (in \autoref{lem: rearranging constant terms}) that without loss of generality, $M_0$ can be assumed to be a diagonal matrix of rank equal to  $m-1$.  We a say that a matrix $M$ is in \emph{normal form} if it has this additional structure. 
 
It is quite easy to observe that the rank of $M_0$ is at most $m-1$. However, for technical reasons, we actually need the lower bound on the rank as well, and this fact is a consequence of  comparing the dimensions (as algebraic varieties)  of the singular locus (which is just the the set of  zeroes of a polynomial of multiplicity at least two) of the determinant and that of the polynomial $\sum_{i = 1}^n x_i^n$. Observations of this nature have been used in the context of determinantal complexity lower bounds before, and indeed, we crucially rely on a well known lemma of von~zur~Gathen (see \autoref{fact:singular locus of det}) for the proof. The details can be found in \autoref{sec: redn to normal form}. 
 
\subsubsection*{Determinantal complexity of higher degree polynomial maps}
As the key ingredient of our proof, we show that for any matrix $M(\vx) \in \F[\vx]^{m\times m}$ where the entries of $M$ are polynomials  of degree at most $n-1$ and $M$ is in normal form, if $\Det(M(\vx)) =\sum_{i = 1}^n x_i^n $, then $m \geq n/2$. Moreover, roughly the same lower bound continues to hold  as long as $\det(M) = \left(\sum_{i = 1}^n x_i^n\right)(1 + Q)$ for any polynomial $Q$, with $Q(\mathbf{0}) = 0$. 

Thus, this is a significant generalization of the $n/2$ lower bound on the standard notion determinantal complexity (where the entries of $M$ are affine functions) of $\sum_{i = 1}^n x_i^n$ as shown in \cite{K19}: this shows that roughly the same lower bound continues to hold even when the entries of the matrix are arbitrary polynomials of degree as high as $n-1$ and the determinant of the matrix equals an arbitrary multiple of $\sum_{i = 1}^n x_i^n$ with a non-zero constant term. 

The proof of the lemma relies on the observation that the polynomial $\sum_{i = 1}^n x_i^n$ does not vanish with multiplicity at least two very often. This seemingly simple observation has been previously used in the context of lower bounds on algebraic branching programs computing this polynomial \cite{K19, CKSV19} in a crucial way. See \autoref{sec:lb for higher deg maps} for further details.

\subsection*{Trading dimension of the matrix for degree}
As the  final ingredient of our proof, we use  a well known property of determinants (\autoref{lem:reducing size}) to show that if there is an $m \times m$ matrix $M$ whose entries are affine functions and $\Det(M) = \sum_{i = 1}^n x_i^n$, then there is an  $(m - n + 2)\times (m-n + 2)$ matrix $N$  whose entries are polynomials of degree at most $n-1$ and $\Det(N) = (\sum_{i = 1}^n x_i^n)(1 + Q)$ for a polynomial $Q$ which vanishes at zero. Moreover, if the matrix $M$ is in normal form, then the matrix $N$ continues to be in normal form. 

Thus, we are in a setup where we can invoke the lower bound in \autoref{lem: n/2 lb for higher deg entries} discussed earlier and we get that the dimension of $N$ which equals $m - n + 2$ must be at least $n/2-1$, thereby implying that $m$ is at least $1.5n - 3$. The details of this step can be found in \autoref{sec:completing the proof}. 

\section{Preliminaries}\label{sec:prelim}
In this paper $\F$ always denotes an algebraically closed field. We use $\vx$ to denote a tuple of $n$ variables $x_1, \ldots, x_n$, where $n$ is understood from the context (or is otherwise explicitly mentioned).

We consider polynomial maps $M : \F^n \to \F^{m \times m}$ given by $m^2$ polynomials $(M_{i,j})_{i,j\in [m]}$. The same object can be thought of as a matrix of polynomials $M(\vx) \in \F[\vx]^{m \times m}$ and we use both points of view interchangeably. The degree of $M$ is the maximum degree of its coordinates, i.e., $\deg M = \max_{i,j} \deg M_{i,j}$.

Each $M(\vx) \in \F[\vx]^{m \times m}$ can be uniquely written as $M(\vx) = M'(\vx) + M_0$, where $M_0 \in \F^{m \times m}$ and in all $m^2$ coordinates of $M'$, the constant term is zero. We then call $M_0$ the \emph{constant part} of the map. A polynomial in which the constant term is zero is called \emph{constant free}, and a polynomial map is called constant free if all of its coordinates are constant free, i.e., in the above decomposition, $M_0=0$.

We denote the determinant polynomial by $\Det$. In cases where it is important to emphasize the dimension of the matrices in question we write it in the subscript, so for example the $m \times m$ determinant polynomial is denoted by $\Det_m$.

We assume knowledge of basic concepts in algebraic geometry such as affine varieties $V \subseteq \CC^n$ and their dimension, which we denote $\dim(V)$. We encourage readers unfamiliar with those terms to consult the excellent textbook \cite{CLO07}. 

\subsection*{Determinantal Complexity}
We now formally define the notion of determinantal complexity, which is the focus of this paper. 
\begin{definition}[Determinantal Complexity]
The determinantal complexity of a polynomial $P \in \F[\vx]$ is defined as the minimum $m \in \N$ such that there is a $m \times m$ matrix $M \in \F[\vx]$ whose entries are polynomials of degree at most one such that 
\[
P = \Det(M) \, . \qedhere
\]
\end{definition}
\begin{remark}
The above definition naturally generalizes to a family of polynomials in the following sense. A family  $\{P_n\}_{n\in \N}$ of polynomials is said to have determinantal complexity at most $f(n): \N \to \N$ if there exists an $n_0 \in \N$, such that for every $n \geq n_0$, the determinantal complexity of $P_n$ is at most $f(n)$. 
\end{remark}

\section{A lower bound on determinantal complexity}\label{sec:main-lb}

This section will be devoted for a proof of \autoref{thm:main}. We begin with the following lemma, which was instrumental in the recent proofs of lower bounds for algebraic formulas and algebraic branching programs.

\begin{lemma}[\cite{CKSV19, K19}]\label{lemma: width lb}
Let $d \ge 2$ be a natural number. Let $P_1, P_2, \ldots, P_t, Q_1, \ldots, Q_t, \allowbreak L \in \CC[\vx]$ be polynomials such that $\deg(P') < d$, $P_1, \ldots, P_t, Q_1, \ldots, Q_t$ have a common zero and 
\[
\sum_{i = 1}^n x_i^d = \sum_{j = 1}^t P_j(\vx)Q_j(\vx) + P' \, .
\]
Then, $t \geq n/2$. 
\end{lemma}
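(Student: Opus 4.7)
The plan is to combine a dimension lower bound on the common zero variety of $P_1, \ldots, P_t, Q_1, \ldots, Q_t$ (from Krull's height theorem) with a dimension upper bound on the singular locus of the polynomial $f := \sum_i x_i^d - P' = \sum_j P_j Q_j$ (coming from the fact that its top-degree form $\sum_i x_i^d$ is smooth away from the origin).

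First I would set $W := V(P_1, \ldots, P_t, Q_1, \ldots, Q_t) \subseteq \CC^n$, which is nonempty by hypothesis. For each $\va \in W$ and each $j \in [t]$, both $P_j$ and $Q_j$ vanish at $\va$, so the product $P_j Q_j$ vanishes to order at least two at $\va$. Summing over $j$, $f(\va) = 0$ and $\nabla f(\va) = \vzero$; that is, $W \subseteq \Sing(f) := V(f, \partial_{x_1} f, \ldots, \partial_{x_n} f)$. Since $W$ is cut out by $2t$ polynomial equations and is nonempty, Krull's height theorem gives $\dim W \geq n - 2t$, and hence $\dim \Sing(f) \geq n - 2t$.

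The crux is to show that $\Sing(f)$ is zero-dimensional. Here the hypothesis $\deg P' < d$ plays its role: it forces the top-degree homogeneous part of $f$ to be exactly $\sum_i x_i^d$, whose gradient $(d x_i^{d-1})_i$ vanishes only at $\vzero$ (using $\mathrm{char}(\F) \nmid d$). To rule out singularities ``at infinity'' I would homogenize $f$ by a fresh variable $z$ to a degree-$d$ form $\tilde f(\vx, z) \in \CC[\vx, z]$ and examine $\Sing(\tilde f) \subseteq \mathbb{P}^n$. At the hyperplane at infinity $\{z = 0\}$, $\tilde f$ restricts to $\sum_i x_i^d$ and its $x_i$-partials restrict to $d x_i^{d-1}$; a projective singular point at infinity would therefore require $\vx = \vzero$, which is not a point of $\mathbb{P}^n$. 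So $\Sing(\tilde f)$ lies entirely inside the affine chart $\{z \ne 0\} \cong \CC^n$, and being a projective variety contained in an affine open it must be a finite set. Consequently $\Sing(f)$ is finite.

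Combining the two bounds, $n - 2t \leq \dim W \leq \dim \Sing(f) = 0$, which gives $t \geq n/2$. The main technical step is the finiteness of $\Sing(f)$ via the projective-closure-at-infinity argument; this is precisely where the nondegeneracy of the leading form $\sum_i x_i^d$ is used and where the hypothesis on the characteristic enters. Once $\Sing(f)$ has been shown to be finite, the desired bound drops out of Krull's theorem applied to the common zero variety.
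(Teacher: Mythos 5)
Your proof is correct and follows essentially the same approach as the argument in the cited works \cite{CKSV19, K19}: the common zero variety $W = V(P_1,\ldots,P_t,Q_1,\ldots,Q_t)$ has dimension at least $n - 2t$ by Krull's height theorem, lies inside the critical locus of $f = \sum_i x_i^d - P'$, and that critical locus is finite because the top-degree form $\sum_i x_i^d$ has no nontrivial critical points (which you verify via the standard homogenize-and-miss-infinity argument, essentially a proof of the generalization of \autoref{fact:sing locus of power-sym} needed here). One small notational remark: rather than appealing to $\Sing(\tilde f)$ (which would also impose $\partial_z \tilde f = 0$), it is cleaner to observe directly that the projective variety $V(\partial_{x_1}\tilde f,\ldots,\partial_{x_n}\tilde f)$ avoids $\{z=0\}$ and hence is finite, since its affine chart $\{z\neq 0\}$ recovers exactly $V(\partial_{x_1} f,\ldots,\partial_{x_n} f)$; the conclusion is the same.
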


We now show that without loss of generality, the constant part of every polynomial map $M$ such that $\sum_{i=1}^n x_i^d = \Det_m(M(\vx))$ has a very special form: is it an $m \times m$ diagonal matrix with $0$ in the $(1,1)$ coordinate and $1$ in all diagonal entries.

\subsection{Reducing the matrix \texorpdfstring{$M$}{M} to a normal form}\label{sec: redn to normal form}

This claim is not entirely new and very similar statements were proved, for example, in \cite{MR04, ABV17}. For completeness, and since the exact statement we need is slightly more general, we provide a proof.

\begin{lemma}\label{lem: rearranging constant terms}
Let $d \ge 2$ be a natural number and let $M(\vx) \in \F[\vx]^{m \times m}$ be a polynomial map such that 
\[
\Det_m(M(\vx)) = \sum_{i = 1}^n x_i^d \, .
\]
Then, there exists a matrix $\tilde{M}(\vx)\in \F[\vx]^{m \times m}$ with $\deg(\tilde{M}) \leq \deg(M)$, 
\[
\Det_m(\tilde{M}(\vx)) = \sum_{i = 1}^n x_i^d \, ,
\]
and the constant part of $\tilde{M}$ is a diagonal $m \times m$ matrix $\tilde{M}_0$ such that $(\tilde{M}_0)_{1,1}=0$ and $(\tilde{M}_0)_{i,i}=1$, for $2 \le i \le m$.
\end{lemma}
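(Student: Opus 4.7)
The plan is to first establish that $M_0 := M(\vzero)$ has rank exactly $m-1$, and then bring it into the desired diagonal form by left and right multiplication by suitable invertible constant matrices.

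Since $\Det(M_0) = f(\vzero) = 0$, where $f(\vx) := \sum_{i=1}^n x_i^d$, the bound $\rank(M_0) \le m-1$ is immediate. The substantive claim is that $\rank(M_0) \ge m-1$. Suppose for contradiction that $\rank(M_0) \le m-2$, and let $W \subseteq \F^{m\times m}$ denote the variety of matrices of rank at most $m-2$. By \autoref{fact:singular locus of det}, $W$ coincides with $\Sing(V(\Det_m))$ and has codimension $4$ in $\F^{m\times m}$. By the chain rule applied to the composition $f = \Det_m \circ M$, if $M(\vy) \in W$ then every partial derivative of $f$ vanishes at $\vy$, so $\vy \in \Sing(V(f))$; hence $M^{-1}(W) \subseteq \Sing(V(f))$. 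On the other hand, the partials of $f$ are $d\,x_i^{d-1}$, and since $\mathrm{char}(\F)$ does not divide $d$, we get $\Sing(V(f)) = \{\vzero\}$, which is $0$-dimensional. By assumption $\vzero \in M^{-1}(W)$, and the dimension theorem for preimages under polynomial maps guarantees that every irreducible component of $M^{-1}(W)$ passing through $\vzero$ has dimension at least $n - 4$, which is positive for $n \ge 5$. This contradicts $\dim \Sing(V(f)) = 0$ and forces $\rank(M_0) = m-1$.

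With $\rank(M_0) = m-1$ in hand, a rank decomposition produces invertible $A, B \in \F^{m\times m}$ with $A\, M_0\, B = D_0$, where $D_0$ is the diagonal matrix having $(D_0)_{1,1} = 0$ and $(D_0)_{i,i} = 1$ for $2 \le i \le m$. The matrix $A\,M(\vx)\,B$ then has constant part $D_0$ and degree at most $\deg(M)$; its determinant equals $(\det A)(\det B) \cdot f$, and we must cancel this scalar without disturbing the constant part. To this end, replace $A$ by $A' := \mathrm{diag}(\mu, 1, \ldots, 1)\, A$ for a scalar $\mu \in \F \setminus \{0\}$: the new matrix $\tilde{M} := A'\, M\, B$ differs from $A M B$ only in that its first row is scaled by $\mu$, and since the first row of $D_0$ is already zero, the constant part of $\tilde{M}$ remains $D_0$. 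Meanwhile $\Det(\tilde M) = \mu \cdot (\det A)(\det B) \cdot f$, and choosing $\mu := ((\det A)(\det B))^{-1}$ yields $\Det(\tilde M) = f$, as required.

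The main obstacle is the first paragraph: we must correctly invoke the dimension theorem for polynomial maps in order to convert the codimension-$4$ statement about $\Sing(V(\Det_m))$ into a positive-dimensional component of $M^{-1}(W)$ through $\vzero$, and for this step \autoref{fact:singular locus of det} does the real work by simultaneously identifying the singular locus of the determinant variety and supplying the codimension bound. The remaining manipulation is routine linear algebra, the only mildly delicate point being the simultaneous normalization of the constant part and of the overall determinant, which is resolved by the row-scaling trick above.
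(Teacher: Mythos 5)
Your proof is correct and follows the same overall strategy as the paper: first show $\rank(M_0) = m-1$ via the dimension count on $M^{-1}(\Sing(\Det_m))$ (the paper packages this as a separate lemma, Lemma 3.5, while you inline the argument for the special case), and then normalize the constant part by constant invertible row and column operations. Your final step is slightly slicker than the paper's: where the paper conjugates by matrices $G_1, G_2$ of determinant $\pm 1$ and then introduces a diagonal matrix $\Delta$ that simultaneously rescales the nonzero diagonal entries to $1$ and absorbs the determinant, you go directly to the target diagonal $D_0$ via a rank decomposition and then correct the scalar by multiplying the first row (which is identically zero in $D_0$) by $\mu = (\det A\, \det B)^{-1}$, so the constant part is untouched while $\Det(\tilde M) = f$.
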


To prove \autoref{lem: rearranging constant terms} we require a few preliminaries. We begin with the definition of a singular locus of a polynomial (or a hypersurface).

\begin{definition}\label{def:singular locus}
Let $f \in \F[\vx]$ be a polynomial. The \emph{singular locus} of $f$, denoted $\Sing(f)$, is the variety defined by
\[
\Sing(f) = \set{\va : \frac{\partial f }{\partial x_i} (\va) = 0, 1 \le i \le n}. \qedhere
\]
\end{definition}

The singular locus of the determinant was studied by von zur Gathen, who proved the following fact.

\begin{fact}[\cite{vzG87}]\label{fact:singular locus of det}
Let $\F$ be an algebraically closed field and let $\Det_m $ denote the $m \times m$ determinant polynomial. Then $\Sing(\Det_m) \subseteq \F^{m \times m}$ is precisely the set of matrices of rank at most $m-2$, and $\dim \Sing(\Det_m) = m^2 - 4$.
\end{fact}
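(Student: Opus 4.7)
The plan is to proceed in two parts, corresponding to the two assertions in the fact. For the set-theoretic characterization, I would use the cofactor expansion of the determinant. Treating $\Det_m$ as a polynomial in the $m^2$ entry variables $x_{i,j}$, the partial derivative $\partial \Det_m / \partial x_{i,j}$ equals $(-1)^{i+j}$ times the minor obtained by deleting row $i$ and column $j$ of the generic matrix. Consequently, a matrix $A \in \F^{m \times m}$ lies in $\Sing(\Det_m)$ if and only if all $(m-1) \times (m-1)$ minors of $A$ vanish, which by the standard minor characterization of rank is equivalent to $\rank(A) \leq m-2$.

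For the dimension assertion, I would realize $\Sing(\Det_m)$ as the image of a simple algebraic morphism. Consider $\Phi : \F^{m \times (m-2)} \times \F^{(m-2) \times m} \to \F^{m \times m}$ defined by $\Phi(B, C) = BC$. Since an $m \times m$ matrix has rank at most $m-2$ iff it factors through $\F^{m-2}$, the image of $\Phi$ is precisely $\Sing(\Det_m)$. The domain is irreducible of dimension $2m(m-2)$, so the image is irreducible as well. By the fiber-dimension theorem, $\dim \Sing(\Det_m) = 2m(m-2) - d$, where $d$ is the dimension of a generic fiber. On the open dense locus where $B$ has full column rank $m-2$ and $C$ has full row rank $m-2$, the fiber over $BC$ is exactly $\{(BG^{-1}, GC) : G \in GL_{m-2}\}$: if $B'C' = BC$ and both factorizations are full rank, then the columns of $B'$ and $B$ both form bases of the common column space, so $B' = BG^{-1}$ for some $G \in GL_{m-2}$, and left-cancellation then forces $C' = GC$. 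Thus $d = \dim GL_{m-2} = (m-2)^2$, so
\[
\dim \Sing(\Det_m) = 2m(m-2) - (m-2)^2 = (m-2)(m+2) = m^2 - 4 .
\]

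The main obstacle, such as it is, lies in the fiber computation: one must check that the full-rank locus inside $\F^{m \times (m-2)} \times \F^{(m-2) \times m}$ is open and dense, and that the $GL_{m-2}$-action really sweeps out the entire fiber over a full-rank factorization. Both are standard linear algebra, after which the dimension bookkeeping is immediate. An alternative approach would be to invoke the general formula for the dimension of determinantal varieties $\{A : \rank(A) \leq r\}$, which gives $r(2m - r)$ for $m \times m$ matrices; specializing to $r = m-2$ yields the same answer. I prefer the direct parameterization above because it is self-contained and gives a concrete handle on generic fibers.
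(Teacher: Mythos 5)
Your argument is correct. Note that the paper does not prove this fact at all --- it is quoted from von zur Gathen's paper \cite{vzG87} --- so there is no internal proof to compare against; your write-up fills that gap. Both halves are sound: the cofactor identity $\partial \Det_m/\partial x_{i,j} = (-1)^{i+j} \cdot (\text{$(i,j)$ minor})$ immediately gives the set-theoretic description, and the parameterization $\Phi(B,C)=BC$ together with the fiber-dimension theorem gives $2m(m-2)-(m-2)^2=m^2-4$. The one point worth making explicit (which your fiber description implicitly relies on) is that over a matrix $A$ of rank \emph{exactly} $m-2$, \emph{every} factorization $A=BC$ has $B$ and $C$ of full rank, since $\rank(BC)\le\min(\rank B,\rank C)$; hence the entire fiber, not just its intersection with the full-rank locus, is the single $GL_{m-2}$-orbit you describe, and since such $A$ are dense in the (closed, irreducible) image, the generic-fiber dimension is indeed $(m-2)^2$.
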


The following is a slight generalization of a lemma of von zur Gathen (cf.\ also \cite{ABV17}).

\begin{lemma}\label{lem:image misses singular locus}
Let $f \in \F[\vx]$ be a polynomial, and let $M : \F^n \to \F^{m \times m}$ be a polynomial map such that $f(\vx) = \Det_m(M(\vx))$. Suppose further that $\dim(\Sing(f)) < n-4$. Then $\Img(M) \cap \Sing(\Det_m) = \emptyset$. Furthermore, all matrices in $\Img(M)$ have rank at least $m-1$.
\end{lemma}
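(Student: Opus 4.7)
The plan is to exploit the chain rule to connect the two singular loci. Differentiating $f(\vx) = \Det_m(M(\vx))$ coordinatewise gives
\[
\frac{\partial f}{\partial x_i}(\va) \;=\; \sum_{j,k} \frac{\partial \Det_m}{\partial y_{jk}}(M(\va)) \cdot \frac{\partial M_{jk}}{\partial x_i}(\va),
\]
so whenever $M(\va) \in \Sing(\Det_m)$, every summand above vanishes and hence $\va \in \Sing(f)$. Thus $M^{-1}(\Sing(\Det_m)) \subseteq \Sing(f)$, and the entire task reduces to giving a lower bound on the dimension of the preimage.

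I would argue by contradiction. Assume $\Img(M) \cap \Sing(\Det_m) \neq \emptyset$, so that $M^{-1}(\Sing(\Det_m))$ is a nonempty closed subvariety of $\F^n$. To control its dimension I would use the standard graph trick: let $\Gamma = \{(\vx, M(\vx)) : \vx \in \F^n\}$ be the graph of $M$ inside $\F^n \times \F^{m \times m}$. Then $\Gamma$ is irreducible of dimension $n$, while $Z := \F^n \times \Sing(\Det_m)$ has dimension $n + m^2 - 4$ by \autoref{fact:singular locus of det}. Since the ambient affine space $\F^n \times \F^{m \times m}$ is smooth of dimension $n + m^2$, the affine dimension theorem (e.g., Hartshorne I.7.1) guarantees that every nonempty irreducible component of $\Gamma \cap Z$ has dimension at least $n + (n + m^2 - 4) - (n + m^2) = n - 4$. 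Projecting along the isomorphism $\Gamma \cong \F^n$ transports this bound onto $M^{-1}(\Sing(\Det_m))$.

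Combined with $M^{-1}(\Sing(\Det_m)) \subseteq \Sing(f)$, this yields $\dim \Sing(f) \geq n - 4$, contradicting the hypothesis $\dim \Sing(f) < n - 4$. Hence $\Img(M) \cap \Sing(\Det_m) = \emptyset$, proving the first assertion. The ``furthermore'' statement is then immediate: by \autoref{fact:singular locus of det}, a matrix has rank at most $m - 2$ if and only if it lies in $\Sing(\Det_m)$, so every matrix in $\Img(M)$ must have rank at least $m - 1$.

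The only point where I expect to need care is justifying the ``preimage of a codimension-$4$ subvariety has codimension at most $4$'' step, since this is not automatic for arbitrary morphisms between arbitrary varieties. The graph construction sidesteps all such worries by replacing ``preimage'' with an intersection inside the smooth ambient space $\F^n \times \F^{m \times m}$, where the affine dimension theorem applies verbatim; everything else is bookkeeping.
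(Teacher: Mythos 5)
Your proof is correct and follows essentially the same strategy as the paper: the chain rule shows $M^{-1}(\Sing(\Det_m)) \subseteq \Sing(f)$, and a dimension count on the preimage yields the contradiction with $\dim \Sing(f) < n-4$. The only difference is in how the preimage dimension bound is justified: the paper cites it directly (Theorem 17.24 of Harris), while you re-derive it via the graph trick and the affine dimension theorem. This is a standard unpacking of the same argument --- the graph construction is precisely how one proves the cited theorem --- so the two proofs are substantively identical, with yours being slightly more self-contained.
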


\begin{proof}
Let $y_{i,j}$ denote the coordinates of $\F^{m \times m}$ and write $M = (M_{i,j})_{i ,j \in [m]}$. Using the chain rule, we compute
\begin{equation}\label{eq:chain rule}
\frac{\partial f}{\partial x_k} = \sum_{i, j \in [m]} \frac{\partial \Det_m}{\partial y_{i,j}} (M(\vx)) \cdot \frac{\partial M_{i,j}}{\partial x_k} (\vx), \quad k \in [n].
\end{equation}
Suppose $A \in \Img(M) \cap \Sing(\Det_m)$, and let $B$ be such that $A = M(B)$. By definition of $\Sing(\Det_m)$, $\frac{\partial \Det_m}{\partial y_{i,j}} (M(B)) = 0$ for all $i, j \in [m]$, and  by \eqref{eq:chain rule} we get that $B \in \Sing (f)$. Thus $M^{-1} (\Sing(\Det_m)) \subseteq \Sing(f)$, and $\dim (M^{-1} (\Sing(\Det_m))) \le  \dim \Sing(f) < n-4$. On the other hand, using a standard lower bound on the dimension of pre-images of polynomial maps (see Theorem 17.24 of \cite{HarrisAlgebraicGeometry}), if $\Img(M)$ and $\Sing(\Det_m)$ aren't disjoint,
\[
\dim(M^{-1} (\Sing(\Det_m))) \ge n + (m^2 - 4) - m^2 = n-4.
\]
This contradiction implies that $\Img(M) \cap \Sing(\Det_m) = \emptyset$. The ``furthermore'' part of the theorem follows from \autoref{fact:singular locus of det}.
\end{proof}

We will also need the following easy fact which shows that $\sum_{i=1}^n x_i ^d$ satisfies that assumption of \autoref{lem:image misses singular locus}.

\begin{fact}[\cite{K19, CKSV19}]\label{fact:sing locus of power-sym}
For every $d \ge 2$, $\dim(\Sing(\sum_{i=1}^n x_i^d)) = 0$.
\end{fact}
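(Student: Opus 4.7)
The plan is a direct computation from the definition of the singular locus, without any algebro-geometric machinery. Set $f = \sum_{i=1}^n x_i^d$. Then the partial derivatives decouple completely: for each $i \in [n]$ we have $\partial f/\partial x_i = d \cdot x_i^{d-1}$, which depends on $x_i$ alone. By \autoref{def:singular locus}, $\Sing(f)$ is cut out by the equations $\{d \cdot x_i^{d-1} = 0\}_{i \in [n]}$, so a point $\va \in \F^n$ belongs to $\Sing(f)$ if and only if $d \cdot a_i^{d-1} = 0$ for every $i$.

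Under the standing hypothesis that $\mathrm{char}(\F) \nmid d$ (which in particular covers the case $\F = \CC$, and whose necessity was noted in the discussion after \autoref{thm:main}), the scalar $d$ is a nonzero element of the field $\F$. Hence each defining equation reduces to $a_i^{d-1} = 0$, and since $\F$ is a field (hence an integral domain), this forces $a_i = 0$ for all $i \in [n]$. Therefore $\Sing(f) = \{\vzero\}$ is a single point, and its dimension as an affine variety is $0$, as required. There is really no obstacle to overcome here: the whole content of the fact is the observation that the partial derivatives of a sum of pure powers are each monomials in a single variable, so their common zero locus collapses to the origin.
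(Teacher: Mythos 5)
Your proof is correct, and it is the direct computation one would expect: the partial derivatives of $\sum_i x_i^d$ decouple into $d\,x_i^{d-1}$, so as long as $d$ is invertible in the field the only common zero is the origin. The paper states this as a cited fact from \cite{K19, CKSV19} without reproducing a proof, and the argument in those references is exactly this elementary calculation, so there is nothing further to compare. You also correctly flag the characteristic caveat, which the paper addresses in the discussion following \autoref{thm:main}: if $\mathrm{char}(\F)$ divides $d$ then every partial derivative vanishes identically and the singular locus is all of $\F^n$, so the hypothesis is genuinely needed.
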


We are now ready to prove \autoref{lem: rearranging constant terms}.

\begin{proof}[Proof of \autoref{lem: rearranging constant terms}]
Let $f = \sum_{i=1}^n x_i^d$ and let $M : \F^n \to \F^{m \times m}$ be a polynomial map such that $f(\vx)  = \Det_m(M(\vx))$, and write $M=M'+M_0$ where $M_0$ is the constant part of $M$.

First, observe that
\[
0 = f(\vzero) = \Det_m(M(\vzero)) = \Det_m(M_0),
\]
which implies that $\rank(M_0) < m$. By \autoref{lem:image misses singular locus} and \autoref{fact:sing locus of power-sym}, we also know that $\rank(M_0) = \rank(M(\vzero)) \ge m-1$, so $\rank(M_0) = m-1$.

By performing Gaussian elimination on the rows and on the columns, we can find two $m \times m$ matrices $G_1, G_2$ such that $\det(G_i) = \pm 1$ for $i=1,2$ and $N_0 := G_1 M_0 G_2$ is a diagonal matrix such that $(N_0)_{1,1} = 0$ and $(N_0)_{i,i} \neq 0$ for $2 \le i \le m$.

Now define a diagonal $m\times m$ matrix $\Delta$ such that $\Delta_{i,i} = 1/(N_0)_{i,i}$ for $2 \le i \le m$, and
\[
\Delta_{1,1} = \Det(G_1) \cdot \Det(G_2) \cdot \prod_{i=2}^m (N_0)_{i,i}.
\]
It readily follows that $\Det(\Delta) = \Det(G_1) \cdot \Det(G_2)$, and that $\tilde{M_0} := (G_1 M_0 G_2) \Delta$ is a diagonal matrix such that $(\tilde{M_0})_{1,1} = 0$ and $(\tilde{M_0})_{i,i} = 1$ for all $2 \le i \le m$.

Finally, define $\tilde{M} = G_1 M G_2 \Delta$. We verify that indeed
\begin{align*}
\Det(\tilde{M}(\vx)) &= \Det(G_1) \cdot \Det(M(\vx)) \cdot \Det(G_2) \cdot \Det(\Delta) \\
&= \Det(M(\vx)) \cdot (\Det(G_1) \cdot \Det(G_2))^2 = \Det(M(\vx)) = f(\vx).
\end{align*}
We also have that
\[
\tilde{M} = G_1 (M' + M_0) G_2 \Delta = G_1 M' G_2 \Delta + G_1 M_0 G_2 \Delta = G_1 M' G_2 \Delta + \tilde{M_0}.
\]
Since $G_1, G_2, \Delta \in \F^{m \times m}$, it also holds that $\tilde{M}' := G_1 M' G_2 \Delta$ is a matrix of constant-free polynomials, and that  $ \deg \tilde{M} \le  \deg M$. 
\end{proof}

We will also use the following simple and well known property of the determinant of a block matrix.

\begin{lemma}\label{lem:reducing size}
Let $M \in \F^{m \times m}$ be a matrix, and let $A \in \F^{t\times t}, B \in \F^{t\times m-t}, C \in \F^{m-t\times t}, D \in \F^{m-t\times m-t}$ be its submatrices as follows:
\[ M = \begin{pmatrix}
   A & B\\\
C & D
 \end{pmatrix}
 \]
 If $D$ is invertible, then 
 \[
 \Det (M)  = \Det (A - BD^{-1}C ) \cdot \Det (D) \, .
 \]
\end{lemma}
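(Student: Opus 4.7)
The plan is to establish the identity via a block LU-style factorization of $M$ and then apply multiplicativity of the determinant. Concretely, I would verify the matrix identity
\[
\begin{pmatrix} A & B \\ C & D \end{pmatrix} = \begin{pmatrix} I_t & BD^{-1} \\ 0 & I_{m-t} \end{pmatrix} \begin{pmatrix} A - BD^{-1}C & 0 \\ 0 & D \end{pmatrix} \begin{pmatrix} I_t & 0 \\ D^{-1}C & I_{m-t} \end{pmatrix},
\]
which is well-defined precisely because $D$ is invertible. Expanding the right-hand side by direct block multiplication recovers $M$; this is a straightforward calculation using $D \cdot D^{-1} = I_{m-t}$ and $BD^{-1} \cdot D = B$.

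With the factorization in hand, I would take determinants on both sides and use the Cauchy--Binet multiplicativity $\Det(XY) = \Det(X)\Det(Y)$. The first and third factors are block upper/lower triangular with identity blocks on the diagonal, so by the standard formula for block triangular determinants each has determinant $1$. The middle factor is block diagonal, and the block diagonal determinant formula yields $\Det(A - BD^{-1}C) \cdot \Det(D)$. Combining these gives the claim.

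The only minor subtlety is justifying the block triangular determinant identities $\Det\bigl(\begin{smallmatrix} I & X \\ 0 & I \end{smallmatrix}\bigr) = 1$ and $\Det\bigl(\begin{smallmatrix} U & 0 \\ 0 & V \end{smallmatrix}\bigr) = \Det(U)\Det(V)$, but both follow immediately from the Leibniz expansion of the determinant, noting that only the permutations which respect the block structure contribute nonzero terms. There is no serious obstacle here; the lemma is classical and the block factorization above is the cleanest route.
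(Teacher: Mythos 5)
Your proof is correct and follows essentially the same strategy as the paper (block factorization plus multiplicativity of the determinant); the only difference is that you use the symmetric three-factor block LDU decomposition, while the paper uses a more compact two-factor decomposition
\[
\begin{pmatrix} A & B \\ C & D \end{pmatrix}
=
\begin{pmatrix} A - BD^{-1}C & BD^{-1} \\ 0 & I_{m-t} \end{pmatrix}
\begin{pmatrix} I_t & 0 \\ C & D \end{pmatrix},
\]
which avoids needing the block-diagonal determinant formula separately. Either route is fine; the substance is identical.
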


\begin{proof}
Follows directly from the decomposition
\[ \begin{pmatrix}
   A & B\\\
C & D
 \end{pmatrix}
 =
 \begin{pmatrix}
   A - BD^{-1}C & BD^{-1}\\\
0 & I_{m-t}
 \end{pmatrix}
 \cdot 
  \begin{pmatrix}
   I_t & 0\\\
C & D
 \end{pmatrix}
 \]
 and the multiplicativity of the determinant.
\end{proof}

\subsection{Determinantal complexity of higher degree polynomial maps}\label{sec:lb for higher deg maps}

In the following lemma we prove a lower bound of $n/2$ on the determinantal complexity in a more general model than the standard model. This is a generalization with respect to two properties. First, the entries of the matrix are no longer constrained to be polynomials of degree at most $1$, and can have degree as high as $d-1$, while computing the degree $d$ polynomial $\left(\sum_{i = 1}^n x_i^d\right)$. Moreover, the determinant of the matrix $M$ does not even have to compute the candidate hard polynomial $\left(\sum_{i = 1}^n x_i^d\right)$ exactly. It suffices if the determinant is equal to a polynomial of the form $\left(\sum_{i = 1}^n x_i^d\right) \cdot (\beta + Q)$ where $\beta$ is a non-zero field constant and $Q$ is an arbitrary polynomial (of potentially very high degree!) which is constant free, i.e. $Q(\mathbf{0}) = 0$.  

\begin{lemma}\label{lem: n/2 lb for higher deg entries}
Let $d \ge 2$ be a natural number and let $M(\vx) \in \F[\vx]^{m \times m}$ such that $\deg(M) \le d-1$, and the constant part of $M$ is a diagonal matrix $M_0$ such that $(M_0)_{1,1}=0$ and $(M_0)_{i,i}=1$ for $2 \le i \le m$.  Suppose that
\[
\Det(M) = \left(\sum_{i = 1}^n x_i^d\right) \cdot (\beta + Q) \, , 
\]
where $\beta \in \F$ is non-zero and $Q$ is a constant free polynomial. Then $m \geq n/2-1$. 
\end{lemma}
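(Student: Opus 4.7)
}

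The plan is to reduce the hypothesis to the setting of \autoref{lemma: width lb} via a cofactor expansion along the first row of $M$. The special normal form of $M_0$ is tailored exactly for this: the $(1,1)$ cofactor $C_{1,1}$ will have constant term $1$, while every other cofactor $C_{1,j}$ ($j\geq 2$) will be constant-free, and every $M_{1,j}$ will be constant-free of degree at most $d-1$. This will allow all the pieces of $\Det(M)$ (apart from a single low-degree leftover) to be expressed as products of two polynomials vanishing at the origin, giving a common zero for free.

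First I would expand
\[
\Det(M) = M_{1,1}\,C_{1,1} + \sum_{j=2}^{m} M_{1,j}\,C_{1,j},
\]
and verify the claims about the cofactors. Evaluating the constant part: removing row $1$ and column $1$ from $M$ leaves a matrix whose constant part is the identity $I_{m-1}$ (from the diagonal $1$'s of $M_0$), so $C_{1,1}$ has constant term $1$; writing $C_{1,1}=1+R$ with $R$ constant-free. For $j\geq 2$, removing row $1$ and column $j$ yields a submatrix whose first column is $(M_{2,1},\dots,M_{m,1})^T$, whose constant part is zero because $M_0$ is diagonal; therefore the constant part of that submatrix is singular and $C_{1,j}$ is constant-free. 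Also, $M_{1,1}$ and the $M_{1,j}$ for $j\geq 2$ are constant-free (since the off-diagonal and $(1,1)$ entries of $M_0$ are zero), with degree at most $d-1$.

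Substituting this structure, I would rewrite
\[
\Det(M) \;=\; M_{1,1} \;+\; M_{1,1}\cdot R \;+\; \sum_{j=2}^{m} M_{1,j}\cdot C_{1,j},
\]
and then use the hypothesis $\Det(M) = \bigl(\sum_{i=1}^n x_i^d\bigr)(\beta+Q)$ to rearrange as
\[
\beta \sum_{i=1}^n x_i^d \;=\; M_{1,1} \;+\; M_{1,1}\cdot R \;+\; \sum_{j=2}^{m} M_{1,j}\cdot C_{1,j} \;+\; (-Q)\cdot \sum_{i=1}^n x_i^d.
\]
Dividing by $\beta$ and absorbing the factor $1/\beta$ into one factor of each product yields an identity of the form $\sum_{i=1}^n x_i^d = L + \sum_{k=1}^{t} P_k Q_k$ with $t = m+1$, where $L = M_{1,1}/\beta$ has degree strictly less than $d$, and each of the $2t$ factors $P_k, Q_k$ vanishes at the origin (in particular they share the common zero $\mathbf{0}$). \autoref{lemma: width lb} then forces $t = m+1 \geq n/2$, which gives $m \geq n/2 - 1$, as claimed.

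The only step requiring any care is the cofactor analysis in paragraph two: all the benefit comes from the specific placement of the $0$ in the $(1,1)$ coordinate of $M_0$ (making both $M_{1,1}$ and $C_{1,j}$ for $j\geq 2$ constant-free) combined with the $1$'s elsewhere (making $C_{1,1}$ have constant term $1$). Everything else is routine bookkeeping. The extra product coming from $(-Q)\cdot \sum x_i^d$ accounts for the ``$-1$'' in the bound $n/2-1$, and there is no obvious way to avoid it when $Q\neq 0$.
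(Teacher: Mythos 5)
Your proposal is correct and follows essentially the same route as the paper: Laplace expansion along the first row, use of the normal form of $M_0$ to show $C_{1,1}=1+R$ with $R$ constant-free and $C_{1,j}$ constant-free for $j\geq 2$, then rearranging into the form required by \autoref{lemma: width lb} with $t=m+1$ products (the $M_{1,1}\cdot R$ term, the $m-1$ terms $M_{1,j}\cdot C_{1,j}$, and the $Q\cdot\sum x_i^d$ term), giving $m+1\geq n/2$. The only cosmetic difference is your justification that $C_{1,j}$ is constant-free for $j\geq 2$ (the first column of the minor vanishes at the origin) versus the paper's count of nonzero entries of the constant part; both are equivalent observations.
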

\begin{proof}
 Using the Laplace expansion of $\Det(M)$ along the first row, we get 
\[
\Det(M) = \sum_{j=1}^m (-1)^{(j + 1)}M_{1, j}\cdot \Det (N_{1, j}) \, ,
\]
where $N_{i, j}$ is the submatrix of $M$ obtained by deleting the $i$-th row and the $j$-th column. For every $j \in [m]$, $j > 1$, we claim that $\Det(N_{1, j})$ is a constant free polynomial, i.e. \[
\Det(N_{1, j})(\mathbf{0}) = \Det\left(N_{1, j}(\mathbf{0})\right) = 0 \, .
\] 
To see this, we observe that for every $j\in [m]\setminus \{1\}$, $N_{1, j}(\mathbf{0})$ is a $(m-1)\times (m-1)$ matrix, which has at most $m-2$ non-zero entries. This follows since $M_0$ has at most $m-1$ non-zero entries and in obtaining $N_{1, j}$ from $M$, we drop the entry $M_{j, j}$, which is one of the $(m-1)$ entries of $M$ with a non-zero constant term, and hence one of the $(m-1)$ non-zero entries of $M_0$. However, we note that $N_{1,1}(\vzero)$ is the $(m-1) \times (m-1)$ identity matrix, so the  constant term of $\Det(N_{1,1})$ is $1$, and we write  $\Det(N_{1,1}) = 1 + P(\vx)$ where $P$ is  constant free. Therefore, we have 
\[
\left(\sum_{i = 1}^n x_i^d\right)\cdot (\beta + Q) = \Det(M) = M_{1,1}(1 + P) + \sum_{j = 2}^m (-1)^{(j + 1)}M_{1, j}\cdot \Det(N_{1, j}) \, 
\]
In other words, 
\[
\left(\sum_{i = 1}^n x_i^d\right)\cdot (\beta + Q) = \Det(M) = M_{1,1} +  M_{1,1}\cdot P + \sum_{j = 2}^m (-1)^{(j + 1)}M_{1, j}\cdot \Det(N_{1, j}) \, 
\]
Slightly rearranging (and using $\beta \neq 0$), we get
\[
\sum_{i = 1}^n x_i^d = \frac{1}{\beta}\left(-\left(\sum_{i = 1}^n x_i^d\right)\cdot Q + M_{1,1} +  M_{1,1}\cdot P + \sum_{j = 2}^m (-1)^{(j + 1)}M_{1, j}\cdot \Det(N_{1, j})  \right)\, 
\]
Since, $\deg(M_{1,1}) < d$ and $M_{1,1}, P, M_{1, 2}, \Det(N_{1, 2}), \ldots, M_{1, k}, \Det(N_{1, k}), Q$ are all constant free (and hence share a common zero, namely $\mathbf{0}$), we have from \autoref{lemma: width lb} that $m \geq n/2-1$. 
\end{proof}

\subsection{Completing the proof of Theorem \ref{thm:main}}\label{sec:completing the proof}
We are now ready to complete the proof of \autoref{thm:main}. 
\begin{proof}[Proof of \autoref{thm:main}]
Let $M$ be an $m \times m$ matrix with $\deg(M) \leq 1$ such that 
\[
\sum_{i = 1}^n x_i^n = \Det(M) \, .
\]
From \autoref{lem: rearranging constant terms}, we can assume without loss of generality that the constant part $M_0$ of $M$ is a diagonal matrix such that $(M_0)_{1,1} = 0$ and $(M_0)_{i,i}=1$ for $2 \le i \le m$. In particular, all the off diagonal entries of $M$ and $M_{1,1}$ are homogeneous linear forms or zero, and $M_{j, j} \neq 0$ for $j > 1$. 

Observe that for every $t \leq m-1$, the principal minor $D_t$ of $M$ which is obtained by deleting the first $m -t$ rows and columns of $M$ is invertible over the field of rational functions $\F(\vx)$. To see this, observe that the matrix $D_t(\mathbf{0})$ is the identity matrix, which implies that $\Det(D_t)$ is a non-zero polynomial. Moreover, since every entry of $M$ has degree at most $1$, and $\Det(M)$ has degree $n$, we know that $m \geq n$. So, we conclude that the principal minor $D := D_{(n-2)}$ of $M$ is invertible over $\F(\vx)$. Thus, if $B$ and $C$ are respectively the submatrices of $M$ defined as 
\[ M = \begin{pmatrix}
   A & B\\\
C & D
 \end{pmatrix}
 \]
then by \autoref{lem:reducing size} we have 
 \begin{equation}\label{eq:block determinant}
 \Det(M)  = \Det(A - BD^{-1}C)\cdot \Det(D) \, .
 \end{equation}
Since $D^{-1} = \adj(D)/\det(D)$, where $\adj(D)$ is the adjugate matrix of $D$, the entries of $D^{-1}$ can be written as as a ratio of two polynomials, where the numerator has degree at most $n-3$ and the denominator, which is equal to $\Det(D)$, has degree at most $n-2$. Moreover, as discussed earlier in the proof, 
the constant part of $D$ is the identity matrix, so there is a constant free polynomial $R \in \F[\vx]$
such that 
\[
\Det(D) = 1 + R \, .
\]
Thus, every entry of the $(m-n+2) \times (m-n+2)$ matrix $A - BD^{-1}C$ can be written as a ratio of two polynomials with the numerator being a polynomial of degree at most $n-1$ and the denominator being equal to $\Det(D)= 1 + R$. Therefore, by clearing the denominators and using \eqref{eq:block determinant}, we get that 
 \[
 \Det(M) \cdot (1 + R)^{m - n + 2}  = \Det(N)\cdot (1 + R) \, ,
 \]
 where $N$ is the matrix with polynomial entries of degree at most $n-1$ obtained by multiplying every entry of $A - BD^{-1}C$ by $1 + R$. Simplifying further, we get 
 \[
 \left(\sum_{i=1}^n x_i^n \right) \cdot (1+R)^{m-n+1} = \Det(M) \cdot (1 + R)^{m - n + 1}  = \Det(N)\, .
 \]
 We are almost ready to invoke \autoref{lem: n/2 lb for higher deg entries} to obtain a lower bound on the size of $N$ (and hence $M$), but to do that we need to ensure that the constant part of $N$, $N_0$, is a diagonal matrix with $(N_0)_{1,1}=0$ and $(N_0)_{i,i}=1$ for $2 \le i\le m-n+2$. We now verify that this is indeed the case. 
 
Recall that by the structure of the constant part $M_0$ of $M$, all the entries of $B$ and $C$ and the $(1,1)$ entry of $A$ are constant free, and the constant term of $A_{i,i}$ is $1$ for $2 \le i \le m-n+2$. Thus, every entry of the matrix $BD^{-1}C$ is a rational function with a constant free numerator, and hence all the off-diagonal entries in $A - BD^{-1}C$ as well as its $(1,1)$ entry are rational functions with a constant free numerator. Moreover, the denominator of all the entries $\left(A - BD^{-1}C\right)$  equals $\Det(D) = 1 + R$, for a constant free polynomial $R$. So, expressing each entry of $A-BD^{-1}C$ as a quotient of polynomials, the constant term of each numerator on the diagonal  is $1$ except for the $(1,1)$ entry, which has a constant free numerator.  Finally, observe that eliminating the denominator of the entries of  $\left(A - BD^{-1}C\right)$ by multiplying every entry by $(1 + R)$ gives us the  matrix $N$.

Thus the matrix $N$ satisfies the hypothesis of \autoref{lem: n/2 lb for higher deg entries}, and hence $(m - n + 2) \geq n/2 -1$. This gives us $m \geq 1.5n - 3$ and completes the proof of \autoref{thm:main}.  
\end{proof}

\section*{Acknowledgment}
Mrinal thanks Ramprasad Saptharishi for various discussions on determinantal complexity over the years, and in particular for explaining the proof of the result of Mignon and Ressayre to him. 
\bibliographystyle{customurlbst/alphaurlpp}
\bibliography{references}

\begin{thebibliography}{CKSV19}

\bibitem[ABV17]{ABV17}
Jarod Alper, Tristram Bogart, and Mauricio Velasco.
\newblock \href {http://dx.doi.org/10.1007/s10208-015-9300-x} {A Lower Bound
  for the Determinantal Complexity of a Hypersurface}.
\newblock {\em Found. Comput. Math.}, 17(3):829--836, 2017.

\bibitem[AFT11]{AFT11}
Boris Alexeev, Michael~A. Forbes, and Jacob Tsimerman.
\newblock \href {http://dx.doi.org/10.1109/CCC.2011.28} {Tensor Rank: Some
  Lower and Upper Bounds}.
\newblock In {\em \CCC{2011}}, pages 283--291. {IEEE} Computer Society, 2011.

\bibitem[BD80]{BD80}
Mark~R. Brown and David~P. Dobkin.
\newblock \href {http://dx.doi.org/10.1109/TC.1980.1675583} {An Improved Lower
  Bound on Polynomial Multiplication}.
\newblock {\em {IEEE} Trans. Computers}, 29(5):337--340, 1980.

\bibitem[Bl{\"{a}}99]{Blaser99}
Markus Bl{\"{a}}ser.
\newblock \href {http://dx.doi.org/10.1109/SFFCS.1999.814576} {A {$\frac52
  n^2$}-Lower Bound for the Rank of {$n \times n$} Matrix Multiplication over
  Arbitrary Fields}.
\newblock In {\em \FOCS{1999}}, pages 45--50, 1999.

\bibitem[Cai90]{Cai90}
Jin{-}Yi Cai.
\newblock \href {http://dx.doi.org/10.1016/0890-5401(90)90036-H} {A Note on the
  Determinant and Permanent Problem}.
\newblock {\em Inf. Comput.}, 84(1):119--127, 1990.

\bibitem[CCL10]{CCL10}
Jin{-}Yi Cai, Xi~Chen, and Dong Li.
\newblock \href {http://dx.doi.org/10.1007/s00037-009-0284-2} {Quadratic Lower
  Bound for Permanent Vs. Determinant in any Characteristic}.
\newblock {\em Comput. Complex.}, 19(1):37--56, 2010.

\bibitem[CKSV19]{CKSV19}
Prerona Chatterjee, Mrinal Kumar, Adrian She, and Ben~Lee Volk.
\newblock \href {http://arxiv.org/abs/1911.11793} {A Quadratic Lower Bound for
  Algebraic Branching Programs}.
\newblock {\em CoRR}, abs/1911.11793, 2019.
\newblock Pre-print available at \href {http://arxiv.org/abs/1911.11793}
  {\path{arXiv:1911.11793}}.

\bibitem[CKW11]{CKW11}
Xi~Chen, Neeraj Kayal, and Avi Wigderson.
\newblock \href {http://dx.doi.org/10.1561/0400000043} {Partial Derivatives in
  Arithmetic Complexity}.
\newblock {\em Foundations and Trends in Theoretical Computer Science}, 2011.

\bibitem[CLO07]{CLO07}
David~A.\ Cox, John~B.\ Little, and Donal O'Shea.
\newblock \href {http://dx.doi.org/10.1007/978-0-387-35651-8} {{\em Ideals,
  Varieties and Algorithms}}.
\newblock Undergraduate texts in mathematics. Springer, 2007.

\bibitem[Har95]{HarrisAlgebraicGeometry}
Joe Harris.
\newblock \href {http://dx.doi.org/10.1007/978-1-4757-2189-8} {{\em Algebraic
  geometry}}, volume 133 of {\em Graduate Texts in Mathematics}.
\newblock Springer-Verlag, New York, 1995.
\newblock A first course, Corrected reprint of the 1992 original.

\bibitem[IL17]{IL17}
Christian Ikenmeyer and J.M. Landsberg.
\newblock \href {http://dx.doi.org/https://doi.org/10.1016/j.jpaa.2017.02.008}
  {On the complexity of the permanent in various computational models}.
\newblock {\em Journal of Pure and Applied Algebra}, 221(12):2911--2927, 2017.

\bibitem[Kal85]{K85}
Kyriakos Kalorkoti.
\newblock \href {http://dx.doi.org/10.1137/0214050} {{A Lower Bound for the
  Formula Size of Rational Functions}}.
\newblock {\em SIAM Journal of Computing}, 14(3):678--687, 1985.

\bibitem[Kum19]{K19}
Mrinal Kumar.
\newblock \href {http://dx.doi.org/10.1007/s00037-019-00186-3} {A quadratic
  lower bound for homogeneous algebraic branching programs}.
\newblock {\em Computational Complexity}, 28(3):409--435, 2019.

\bibitem[LMR13]{LMR13}
Joseph~M. Landsberg, Laurent Manivel, and Nicolas Ressayre.
\newblock \href {http://dx.doi.org/10.4171/CMH/292} {Hypersurfaces with
  degenerate duals and the geometric complexity theory program}.
\newblock {\em Comment. Math. Helv.}, 88(2):469--484, 2013.

\bibitem[LR17]{LR17}
J.M. Landsberg and Nicolas Ressayre.
\newblock \href {http://dx.doi.org/10.1016/j.difgeo.2017.03.017} {Permanent v.
  determinant: An exponential lower bound assuming symmetry and a potential
  path towards Valiant's conjecture}.
\newblock {\em Differential Geometry and its Applications}, 55:146--166, 2017.

\bibitem[Mes89]{Meshulam}
Roy Meshulam.
\newblock \href
  {https://www.sciencedirect.com/science/article/pii/0024379589904655} {On two
  extremal matrix problems}.
\newblock {\em Linear Algebra and its Applications}, 114--115:261--271, 1989.

\bibitem[MM61]{MM61}
Marvin Marcus and Henryk Minc.
\newblock \href {http://dx.doi.org/10.1215/ijm/1255630882} {On the relation
  between the determinant and the permanent}.
\newblock {\em Illinois J. Math.}, 5(3):376--381, 09 1961.

\bibitem[MR04]{MR04}
Thierry Mignon and Nicolas Ressayre.
\newblock \href {http://dx.doi.org/10.1155/S1073792804142566} {{A quadratic
  bound for the determinant and permanent problem.}}
\newblock {\em International Mathematics Research Notes}, 2004(79):4241--4253,
  2004.
\newblock Available on
  \href{http://citeseerx.ist.psu.edu/viewdoc/summary?doi=10.1.1.106.4910}{\tt
  citeseer:10.1.1.106.4910}.

\bibitem[P{\'{o}}l13]{Polya13}
George P{\'{o}}lya.
\newblock \href
  {http://babel.hathitrust.org/cgi/pt?id=mdp.39015085215716;seq=399} {Aufgabe
  424}.
\newblock {\em Archiv der Mathematik und Physik}, 20:271, 1913.

\bibitem[Sap15]{S15}
Ramprasad Saptharishi.
\newblock \href {https://github.com/dasarpmar/lowerbounds-survey/releases/} {A
  survey of lower bounds in arithmetic circuit complexity}.
\newblock Github survey, 2015.

\bibitem[Shp01]{Shpilka01}
Amir Shpilka.
\newblock \href {http://dx.doi.org/10.1109/SFCS.2001.959910} {Lower Bounds for
  Matrix Product}.
\newblock In {\em \FOCS{2001}}, pages 358--367, 2001.

\bibitem[SY10]{SY10}
Amir Shpilka and Amir Yehudayoff.
\newblock \href {http://dx.doi.org/10.1561/0400000039} {Arithmetic Circuits: A
  survey of recent results and open questions}.
\newblock {\em Foundations and Trends in Theoretical Computer Science},
  5:207--388, March 2010.

\bibitem[Sze13]{Szego13}
G\'{a}bor Szeg\H{o}.
\newblock \href {http://hdl.handle.net/2027/uc1.b2958231} {L\"{o}sung zu
  Aufgabe 424}.
\newblock {\em Archiv der Mathematik und Physik}, 21:291--292, 1913.

\bibitem[Val79]{V79}
Leslie~G. Valiant.
\newblock \href {http://dx.doi.org/10.1145/800135.804419} {{Completeness
  Classes in Algebra}}.
\newblock In {\em \STOC{1979}}, pages 249--261, 1979.

\bibitem[vzG87]{vzG87}
Joachim von~zur Gathen.
\newblock \href {https://core.ac.uk/download/pdf/82095887.pdf} {Permanent and
  determinant}.
\newblock {\em Linear Algebra and its Applications}, 96:87--100, 1987.

\bibitem[Yab15]{Yabe}
Akihiro Yabe.
\newblock \href {http://arxiv.org/abs/1504.00151} {Bi-polynomial rank and
  determinantal complexity}.
\newblock {\em CoRR}, abs/1504.00151, 2015.
\newblock Pre-print available at \href {http://arxiv.org/abs/1504.00151}
  {\path{arXiv:1504.00151}}.

\end{thebibliography}

\end{document}